\useunder{\uline}{\ul}{}
\def\BibTeX{{\rm B\kern-.05em{\sc i\kern-.025em b}\kern-.08em
    T\kern-.1667em\lower.7ex\hbox{E}\kern-.125emX}}
\newtheorem{lemma}{Lemma}
\newtheorem{theorem}{Theorem}
\newtheorem{corollary}{Corollary}
\newtheorem{Assumption}{Assumption}
\begin{document}

\title{Robust Event-Triggered Integrated Communication and Control with Graph Information Bottleneck Optimization}

\author{Ziqiong Wang, Xiaoxue Yu, Rongpeng Li, and Zhifeng Zhao
    \thanks{This work was supported in part by the National Key Research and Development Program of China under Grant 2024YFE0200600, in part by the Zhejiang Provincial Natural Science Foundation of China under Grant LR23F010005.}
    
    \thanks{Z. Wang, X. Yu and R. Li are with the College of Information Science and Electronic Engineering, Zhejiang University (email: \{wangziqiong, sdwhyxx, lirongpeng\}@zju.edu.cn).}
    \thanks{Z. Zhao is with Zhejiang Lab as well as the College of Information Science and Electronic Engineering, Zhejiang University (email: zhaozf@zhejianglab.com).}
}

\maketitle
   
\begin{abstract}
Integrated communication and control serves as a critical ingredient in Multi-Agent Reinforcement Learning. However, partial observability limitations will impair collaboration effectiveness, and a potential solution is to establish consensus through well-calibrated latent variables obtained from neighboring agents. Nevertheless, the rigid transmission of less informative content can still result in redundant information exchanges. Therefore, we propose a Consensus-Driven Event-Based Graph Information Bottleneck (CDE-GIB) method, which integrates the communication graph and information flow through a GIB regularizer to extract more concise message representations while avoiding the high computational complexity of inner-loop operations. To further minimize the communication volume required for establishing consensus during interactions, we also develop a variable-threshold event-triggering mechanism. By simultaneously considering historical data and current observations, this mechanism capably evaluates the importance of information to determine whether an event should be triggered.
Experimental results demonstrate that our proposed method outperforms existing state-of-the-art methods in terms of both efficiency and adaptability.
\end{abstract}

\begin{IEEEkeywords}
Communication and control co-design, event trigger, graph information bottleneck optimization, consensus-oriented, multi-agent reinforcement learning. 
\end{IEEEkeywords}

\section{Introduction}\label{sec1_Introduction}

Nowadays, the thriving development of Multi-Agent Systems (MAS) \cite{MAS} has propelled the integration of communication and control into a pivotal research direction, showcasing significant system-wide advancements. 
Typically, such a co-design is contingent on Multi-Agent Reinforcement Learning (MARL) \cite{MARLsurvey}, but suffers from
partial observability, as agents can only access limited or incomplete information about the environment and other agents' states. 
Therefore, in order to make informed collaborative decisions, it necessitates the information sharing and aggregation among agents \cite{Tarmac2019}. 
Although full-mesh, raw data transmission \cite{masia2022} could theoretically resolve this problem, the overwhelming information exchange makes it often practically infeasible, especially under bandwidth-limited scenarios.
Until recently,
there starts to emerge algorithms \cite{xiang2023decentralized} that establish consensus over well-calibrated latent variables from neighbors only. 
Compared to direct data transmission, the consensus-based algorithm boosts scalable decentralized execution.
Nevertheless, the insufficiency of communication and control co-design, such as the blunt transmission of all less informatively transmitted messages \cite{niu2021}, leaves significant redundancy in exchanged messages. Correspondingly, the inevitable difficulties posed by limited bandwidth and noisy channels still call for a more efficient algorithmic framework. 

As a remedy, theoretically guided refinement and compression contribute to squeezing the amount of exchanged information for consensus inference. In that regard, the integration of Information Bottleneck (IB) emerges as a highly promising direction to improve overall communication efficiency. In the context of general representation learning, the IB principle \cite{IB2020} emphasizes that the optimal representation should contain sufficient and minimal information that is beneficial for ultimate tasks. Concurrently, TOCF \cite{TOCF} introduces IB theory into the multi-agent communication reinforcement learning (MACRL) scenarios, enabling efficient message compression in communication. However, such IB-based representation learning methods typically require input data to meet independent and identically distributed (i.i.d.) conditions, which do not always hold in the context of multi-agent communication. Meanwhile, Ref. \cite{TOCF} overlooks that both graph structure and agent features carry important information in MACRL, whereas Graph Information Bottleneck (GIB) theory \cite{GIB} introduces a local-dependence assumption and provides a paradigm that regularizes the topological as well as the feature attributes, offering significant advantages, but has not yet been applied to Reinforcement Learning (RL). In this regard, MAGI \cite{MAGI2024} extends the GIB principle to MACRL methods to derive more effective and concise message representations. However, the classical MAGI framework optimizes the communication graph and information flow separately. Furthermore, MAGI requires agents to communicate with all agents within a certain range before assessing the importance of information. Therefore, it inevitably adds significantly to the communication volume, necessitating the design of a more efficient information compression approach.

Conventional control algorithms have adopted event-triggered mechanisms \cite{kim2019learning,ATOC2018}, which reduce redundant exchanged information by appropriately tuning the frequency of communications. However, these methods often determine the triggering moment solely based on current observations, but neglect the influence of historical information. In contrast, ETCNet \cite{ETCNET2023} enables agents to make more informed decisions by leveraging both immediate and historical data. Nevertheless, it fails to offer a detailed explanation regarding the criteria for setting the threshold, while many articles \cite{He2022} rely on fixed thresholds, potentially encountering accumulative errors due to stale updates. Furthermore, the effective incorporation of event-triggered communications into consensus inference still awaits for comprehensive investigation. 

In this paper, we propose the Consensus-Driven Event-based GIB (CDE-GIB) algorithm. Specifically, 
in contrast to the separate compression of the communication graph and data flow \cite{MAGI2024,GIB}, we propose a more efficient GIB method for joint optimization, thus avoiding high computational complexity due to inner loops. Additionally, to further reduce the communication volume required for agents to reach consensus during interaction, we introduce a variable-threshold event-triggering mechanism that takes account of both historical data and current observations and determines whether to trigger an event from the perspective of information importance.
In comparison to existing works in the literature, the contribution of this paper can be summarized as follows.

\begin{itemize}
    \item We propose the CDE-GIB framework, which novelly combines event-triggering and GIB, to ameliorate the message inefficiency of consensus inference in decentralized MARL.
    \item We develop a variable-threshold event-triggering mechanism (VT-ETM) that dynamically evaluates the information importance towards inferring the consensus across multiple agents. Such an information importance-driven event-triggering mechanism also significantly distinguishes with existing MARL-empowered solutions \cite{kim2019learning,ATOC2018,ETCNET2023}.
    \item Additionally, we devise a GIB regularizer that fuses the communication graph and information flow to obtain more concise message representations, which are applicable in consensus inference in decentralized MARL, thereby improving the efficiency of downstream control tasks.
    \item We validate the universal effectiveness of our framework through extensive simulations in the multi-agent particle environment \cite{MPE2017}.
\end{itemize}

The remainder of the paper is organized as follows. Sec. \ref{sec2_System Model} briefly introduces the system model and formulates the problem. Sec. \ref{sec3_Framework} presents the overview of our proposed CDE-GIB framework. In Sec. \ref{sec4_Experiment}, we elaborate on the experimental results and discussions. Finally, Sec. \ref{sec5_Conclusions} concludes the paper.

\section{System Model and Problem Formulation}\label{sec2_System Model}

\begin{figure}[tp]
    \centering
    \includegraphics[scale=0.4]{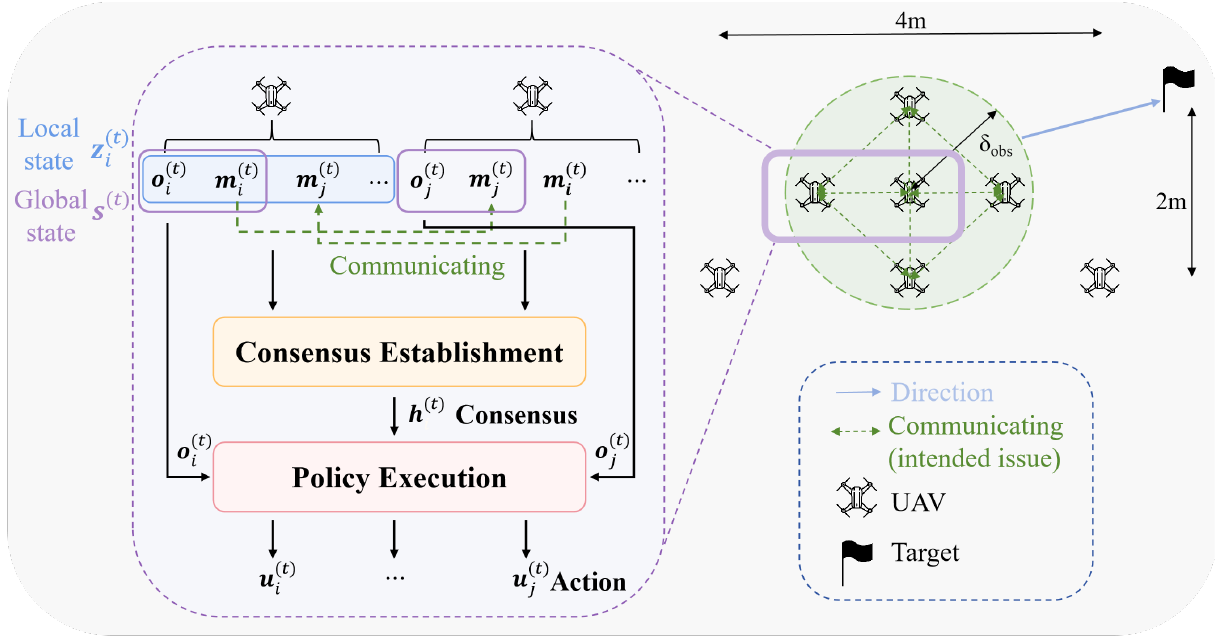}
    \vspace{-0.2cm}
\caption{Illustration of MARL information control.} 
\vspace{-1em}
\label{environment}
\end{figure} 

Beforehand, we summarize the main notations in Table \ref{nota}.

\begin{table}[tbp]
    \renewcommand\arraystretch{1.3}
    \centering
    \caption{Major notations used in the paper.}
    \vspace{-0.3em}
    \begin{tabular}{lm{6.5cm}}
    \hline
    \textbf{Notation} & \textbf{Definition} \\ \hline
    $\textbf{s}^{(t)},\textbf{o}_g^{(t)}$ & Global state and observation.\\
    $\textbf{z}_i^{(t)},\textbf{u}_i^{(t)}$ & Local state, individual action of agent $i$ at time step $t$.\\
    $\textbf{o}_i^{(t)},\textbf{m}_i^{(t)}$ & Local observation, exchanged message of agent $i$ at time step $t$. \\
    $\textbf{p}_i^{(t)},\textbf{v}_i^{(t)}$ & Velocity and position that constitute the local observation of agent $i$ at time step $t$.\\
    $\xi _{i}^{\left ( t \right )}$ & Neighbors within the observation range of agent $i$ at time step $t$. \\
    $\beta_i^{(t)}$ & Importance ratio of MAPPO of agent $i$ at time step $t$. \\
    $\alpha$ & Coefficient of policy entropy.\\
    $\pi_{\theta_i}$, $\theta_i$ & Target policy and its parameter of agent $i$.\\
    $\pi_{\theta_{i,old}}$, $\theta_{i,old}$ & Behavior policy and its parameter of agent $i$. \\
    $\mathcal{T}_i$ & Sequence of event-triggering times of agent $i$. \\
    $\mathcal{G}$ & Graph in the Graph Neural Network.\\
    
    \hline
    \end{tabular}
    \label{nota}
    \vspace{-0.5cm}
\end{table}
\subsection{System Model}

The MARL problems can be typically modeled as a Decentralized Partially Observable Markov Decision Process (Dec-POMDP), which is characterized by a tuple $\left \langle \mathcal{N, S, U, P, R}, \Omega, \mathcal{O, \gamma} \right \rangle $. 
$\mathcal{N}$ represents the set of $N$ active nodes.
Due to the practical communication limitation, the heterogeneous connectivity across nodes can be characterized by an adjacency matrix $\mathbf{A} \in \mathbb{R}^{N \times N}$ \texttt{--} $A(i,j) = 1$ if and only if the Euclidean distance between $i \in \mathcal{N}$ and $j \in \mathcal{N}$ is less than the maximum observation range $\delta _\text{com}$; and it nulls otherwise. Naturally, the MAS can be denoted as a graph $\mathcal{G} = (\mathcal{N}, \mathcal{E}, \mathcal{H})$, where $\mathcal{E} \subseteq \mathcal{N} \times \mathcal{N}$ represents the edge set characterized by $\mathbf{A}$, and $\mathcal{H} = \left\{ \mathbf{h}^{(t)} \mid t = 1, 2, \dots, T \right\}$ contains the feature attributes of the nodes.
$\mathcal{S}$ denotes the global state space of the problem and $\mathcal{U}$ is the homogeneous action space for the multi-agent system. The joint action $\textbf{u}^{(t)}=\left\{\textbf{u}_i^{(t)}\mid\forall i \in \mathcal{N}\right\}$ executed at the current state $\textbf{s}^{(t)}$ makes the environment transit to the next state $\textbf{{s}}^{(t+1)}$ according to the transition probability function $ \mathcal{P}(\textbf{s}^{(t+1)} \mid \textbf{s}^{(t)}, \textbf{u}^{(t)}) : \mathcal{S} \times \mathcal{U} \times \mathcal{S} \to [0, 1]$. Due to the limited capacity for perception of the complex environment, each agent $i$ acquires a local observation $\textbf{o}^{(t)}_{i} \in \Omega$ via the observation function $\mathcal{O}(\textbf{o}_{i}^{(t)} \mid \textbf{s}_i^{(t)}, i) : \mathcal{S} \times \mathcal{N} \times \Omega \rightarrow [0, 1]$. All agents share a global reward function $ \mathcal{R}(\textbf{s}^{(t)}, \textbf{u}^{(t)}) : \mathcal{S} \times \mathcal{U} \ \to \mathbb{R} $ and the overall objective is to maximize the total discounted reward $\mathbb{E}\left [  {\textstyle \sum_{t}\gamma ^{t}\mathcal{R}^{\left ( t \right ) }  }  \right ]  $, where $\gamma \in \left [ 0, 1 \right ] $ means a discount factor. In alignment with the Dec-POMDP framework, we specify the elements as follows.

1) State: The state $\mathbf{s}^{(t)} \in \mathcal{S}$ can be task-dependent. For example, in an Unarmed Aerial Vehicle (UAV) scenario, an agent $i$ can obtain direct observation $\textbf{o}_{i} ^{\left ( t \right ) }  = \left \{ \left ( \textbf{p}_{j}^{\left ( t \right ) }, \textbf{v}_{j}^{\left ( t \right ) }     \right )\mid \forall j \in \xi _{i}^{\left ( t \right ) }   \right \} $ composed of the positions $\textbf{p}_{j}^{\left ( t \right )}$ and velocities $\textbf{v}_{j}^{\left ( t \right )}$ of itself and its neighbors and receives exchanged messages $\left \{ \textbf{m}_{j}^{\left ( t \right ) }\mid \forall j \in \xi _{i}^{\left ( t \right ) }   \right \} $, where $\textbf{m}_{j}^{\left ( t \right ) }$ represents a learnable vector intended for communication and $\xi _{i}^{\left ( t \right )}$ comprises agent $i$ and its neighbors. Moreover, the global observation $\textbf{o}_g^{(t)}$ can be expressed as $\textbf{o}_g^{(t)} = \left\{ \left(\textbf{p}_i^{(t)}, \textbf{v}_i^{(t)}\right) \mid \forall i \in \mathcal{N} \right\}$. Therefore, the global state $\textbf{s}^{(t)} = \left ( \textbf{o}_g^{(t)}, \left\{ \textbf{m}_i^{(t)} \mid \forall i \in \mathcal{N} \right\}  \right ) \in \mathcal{S}$ encompasses local states $\textbf{z}_i^{(t)} = \left(\textbf{o}_i^{(t)}, \left\{\textbf{m}_j^{(t)} \mid j \in \xi_i^{(t)}\right\}\right)$ of all agents.

2) Action: Based on the local state $\textbf{z}_i^{(t)}$, each agent determines its acceleration $\textbf{u}_i^{(t)} = \left( u_{x_i}^{(t)}, u_{y_i}^{(t)} \right) \in \mathcal{U}$ according to its policy $\pi_{\theta_i}\left(\cdot \mid \textbf{z}_i^{(t)}\right)$ individually to accomplish assigned relative tasks.

3) Reward: We define the reward function as a weighted sum of multiple task-oriented and/or communication-related components, which are detailed in Section \ref{sec4_Experiment}.

\subsection{Multi-Agent Proximal Policy Optimization}\label{mappo}

To steer all agents toward maximizing the discounted accumulated reward $\mathbb{E}\left [ \sum_{t}\gamma ^{t}\mathcal{R}^{(t)}\right ]$, MAPPO \cite{MAPPO2022} is employed as the base RL method, which combines single-agent PPO \cite{PPO2017} with the centralized training and decentralized execution (CTDE) paradigm, aiming to learn both the individual policy $\pi_{\theta_i}\left(\cdot \mid \textbf{z}_i^{(t)}\right)$ for each agent $i$ and the value function $V_{\phi}(\textbf{s}^{(t)}) = \mathbb{E}_\pi \left [ \sum_{t}\gamma ^{t}\mathcal{R}^{(t)}|  \textbf{s}^{(t)}\right ]: \mathcal{S} \rightarrow \mathbb{R}$, parameterized by $\theta _{i} $ and $\phi $, respectively. Following the design of PPO, MAPPO retains the old versions of $\theta_{i,\text{old}} $ and $\phi_\text{old}$, while $\theta_{i,\text{old}} $ is used to interact with the environment and accumulate the samples. Additionally, the parameters $\theta _{i} $ and $\phi $ are periodically updated to maximize the objective function,
\begin{align}
J_{\pi_i}^{(t)}(\theta_i) &= \min\left(\beta_i^{(t)} \hat{A}^{(t)}, \ \text{clip}\left(\beta_i^{(t)}, 1 - \varepsilon, 1 + \varepsilon\right) \hat{A}^{(t)}\right), \notag \\
J_V^{(t)}(\phi) &= -\left(V_{\phi}(\textbf{s}^{(t)}) - \left(\hat{A}^{(t)} + V_{\phi_{\text{old}}}(\textbf{s}^{(t)})\right)\right)^2 \label{MAPPO},
\end{align}
where $\beta_i^{(t)} = \frac{\pi_{\theta_i}(\textbf{a}_i^{(t)} \mid \textbf{z}_i^{(t)})}{\pi_{\theta_{i,\text{old}}}(\textbf{a}_i^{(t)} \mid \textbf{z}_i^{(t)})}$ represents the importance ratio, $\varepsilon $ denotes a hyperparameter, while the Generalized Advantage Estimation (GAE) $\hat{A}^{(t)} = \sum_{l=0}^{T-t-1} (\gamma \lambda)^l \delta^{(t+l)}$ with the advantage estimate $\delta^{(t)}  = \mathcal{R}^{(t)} + \gamma V_{\phi_{\text{old}}}(\textbf{s}^{(t+1)}) - V_{\phi_{\text{old}}}(\textbf{s}^{(t)})$. Consequently, the final optimization objective of MAPPO is given by
\begin{equation}
\vspace{-0.3em}
J_{\text{MAPPO}} = \mathbb{E}_{i, t} \left[ J_{\pi_i}^{(t)}(\theta_i) + J_{V}^{(t)}(\phi) + \alpha H\left(\pi_{\theta_i}(\cdot \mid \textbf{z}_i^{(t)})\right) \right],
\end{equation}
where $\alpha$ is a coefficient and $H$ represents the entropy function.

\subsection{Problem Formulation}

As illustrated in Fig. \ref{environment}, our objective is to enable MARL-driven agents to maximize the global reward. Nevertheless, due to the partial observability of the global state $\textbf{s}^{(t)}$ and the distinction over local states $\textbf{z}_i^{(t)}$, it is essential to infer some global consensus across nodes beforehand, thus making consistent actions in a decentralized manner. Indeed, the consensus inference lies in how to leverage limited available information $\textbf{z}_i^{(t)}$ to make the inferred state as close to $\textbf{s}^{(t)}$ as possible. Many solutions \cite{xiang2023decentralized, Tarmac2019, masia2022} have been proposed in this area. Considering its performance superiority, ConsMAC \cite{xiang2023decentralized} is taken into account in this manuscript, while the proposed solution is applicable to other works such as TarMAC \cite{Tarmac2019} and MASIA \cite{masia2022}.

Generally, ConsMAC first leverages the combination of a GRU-like memory module $\mathcal{F}_{\psi_M}$, parameterized by $\psi_M$, and positional encoding-based concatenation \cite{attention2017,gnn2020,gnn2024} to efficiently embed local observation and exchanged messages $\left\{\textbf{m}_j^{(t)} \mid j \in \xi_i^{(t)}\right\}$ in $\textbf{o}_i^{(t)}$.  Mathematically,
\begin{align}
&\mathbf{E}_{m_{i}}^{(t)} \label{Emi}
\\
=& \left[\mathcal{F}_{\psi_M} \left([\textbf{m}_{i_0}^{(t)} \, \| \, \textbf{o}_i^{(t)}]\right) \, \| \, \mathbf{\Phi}_{d_0}^{(t)}, \cdots, \left(\mathcal{A}_j \, \| \, \mathbf{\Phi}_{d_j}^{(t)} \right), \cdots\right]^\top,\nonumber
\end{align}
where $\left |  \right | $ denotes the concatenation operation,  
$
\mathbf{\Phi}_{d_j}^{(t)} = \sqrt{\frac{1}{D}} $$\left[\cos(w_1 L_2(\textbf{p}_{i_j}^{(t)})), \ldots, \cos(w_D L_2( \textbf{p}_{i_j}^{(t)} )) \right]^\top    \label{d}
$ with $D$ learnable weights $\psi_D = [w_1, \cdots, w_D]$, and $\mathcal{A}(\cdot)$ denotes the availability of information. Notably, for each agent $i \in \mathcal{N}$, $i_j$ represents the $j$-th nearest neighbor, while $i_0$ refers to the agent itself. Therefore, assuming the existence of a lossless channel from $i$ to $j$, $\mathcal{A}(\cdot)$ becomes valid only if the occurrence of sending messages $\textbf{m}_{i_j}^{(t)}$ from $j$ to $i$.

Afterward, each agent then aggregates a latent vector $\textbf{h}^{(t)}$ as
\begin{equation}
\begin{aligned}
\textbf{h}^{(t)} = \text{MHA}_{\psi_A} \left(\mathbf{E}_{m_i}^{(t)}, \mathbf{E}_{m_{i}}^{(t)}, \mathbf{E}_{m_{i}}^{(t)}\right),
\end{aligned}
    \label{H}
\end{equation}
where $\text{MHA}$ refers to a multi-head attention layer parameterized by $\psi_A$. On the basis, ConsMAC utilizes a global estimator $\mathcal{F}_{\psi_E}$, parameterized by $\psi_E$, to estimate the state embedding $\hat{\textbf{e}}^{(t)} = \mathcal{F}_{\psi_E}\left(\textbf{h}^{(t)}\right)$ in a supervised learning manner. Specifically, the Consensus Establishment (CE) loss function is computed as
\begin{equation}
\begin{aligned}
\mathcal{L}_{\text{CE}}(\Psi) = \mathbb{E}_{{t}}[\parallel\textbf{o}_g^{(t)}-\hat{\textbf{e}}^{(t)}\parallel^2],
\label{CE}
\end{aligned}
\end{equation}
where $\Psi = [\psi_D, \psi_M, \psi_A, \psi_E]$. 
Through this process, the intermediate outputs of ConsMAC implicitly encode the global information, effectively establishing a consensus that captures the states of all agents. Meanwhile, the information flow also serves as an optimal message representation that encompasses all necessary details for the GIB. Besides, the message $\textbf{m}_{i}^{(t+1)}$ for time-step $t+1$ will be computed as
\begin{equation}
    \textbf{m}_{i}^{(t+1)} = \mathbf{E}_{o_i}^{(t)} + \varpi _{i}^{(t)} \textbf{h}^{(t)},
    \label{mt+1}
\end{equation}
where the embedding vector $\mathbf{E}_{o_i}^{(t)}$ and the communication information weight \( \varpi _{i}^{(t)} \) can be obtained by Multi-Layer Percepton (MLP)-based encoders $\mathcal{F}_{\theta_O}$ and $\mathcal{F}_{\theta_W}$ as
$\mathbf{E}_{o_i}^{(t)} = \mathcal{F}_{\theta_O} \left( \textbf{o}_i^{(t)} \right)$ and $\varpi _i^{(t)} = \mathcal{F}_{\theta_W} \left( \textbf{o}_i^{(t)} \right)$.

Additionally, an executor $\mathcal{F}_{\theta_E}$ is employed to sample the final action output by calculating the mean of the Gaussian distribution as,
\begin{equation}
\begin{aligned}
\mu_i^{(t)} = \mathcal{F}_{\theta_E}\left(\textbf{m}_i^{(t+1)}\right), \quad \textbf{u}_i^{(t)} \sim \text{Normal}\left(\mu_i^{(t)}, \sigma^2\right),
\end{aligned}
    \label{action}
\end{equation}
where $\sigma^2$ represents the variance constant that introduces randomness to the agent's actions during exploration and gradually diminishes throughout the training process. As mentioned in Sec. \ref{mappo}, in alignment with MAPPO, we consider $[ \Theta, \Psi ]$ as the parameters of the final policy $\pi_{\theta_i}$ in Eq. \eqref{MAPPO}, where $\Theta = [\theta_O, \theta_W, \theta_E]$.

Based on the aforementioned consensus inference mechanism, the problem can be reformulated as the calibration of $\mathcal{A}$ by 
a variable-threshold event-triggering mechanism and global consensus inference $\mathbf{h}^{(t)}$. In other words,
if all agents share the same set of parameters during training to enhance learning efficiency, expressed as $\pi_{\theta_i} = \pi_{\theta}\ (\forall i \in \mathcal{N})$, it can be written as
\begin{equation}
\begin{aligned}
& \max_{\pi_{\theta}} \mathbb{E}_{t} \left[\sum_{t}\gamma ^{t}\mathcal{R}^{(t)} \mid \pi_{\theta}  \right], \\
\text{s.t.}\quad
&\textbf{u}_i(t) = \pi_\theta \left( g \left(\mathcal{A}^{(t)}, \textbf{h}^{(t)} \right) \right).
\end{aligned}
\end{equation}
where $g$ function denotes the specific calculations, which will be detailed in Sec. \ref{sec3_Framework}.

\section{The Framework of CDE-GIB}\label{sec3_Framework} 

The overall framework of the proposed CDE-GIB is shown in Fig. \ref{framework}. In addition to the policy execution module, it also encompasses the VT-ETM module and GIB-based consensus establishment module, which evaluate the information importance based on current and historic observations and encode concise message representation for consensus inference, respectively.

\begin{figure}[tp]
    \centering
    \includegraphics[scale=0.45]{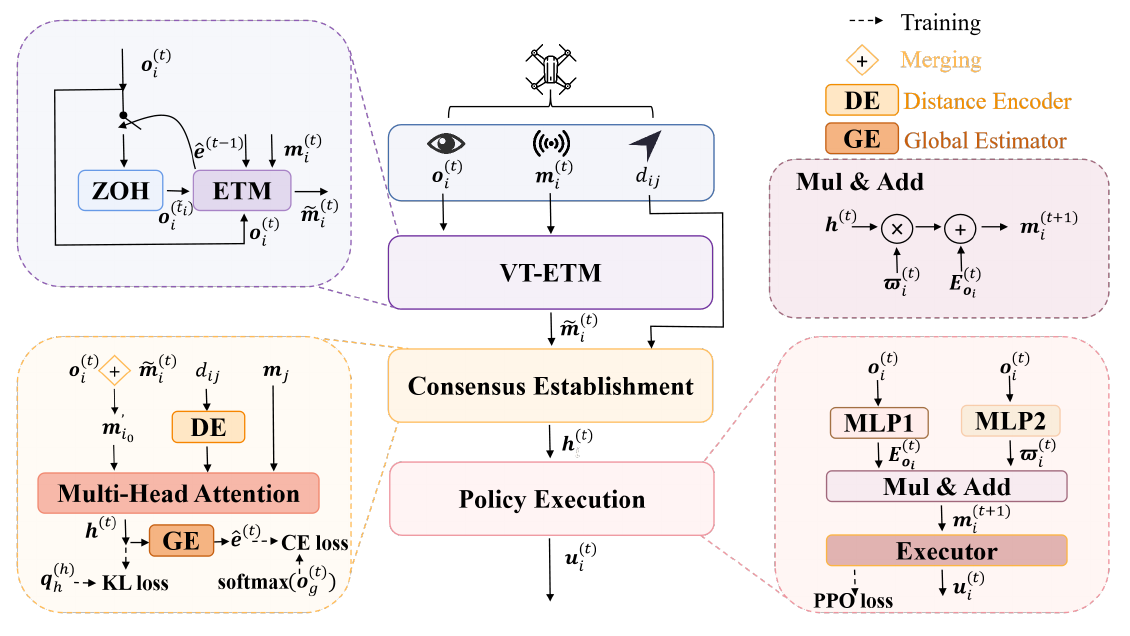}
    \vspace{-0cm}
\caption{The overall framework of CDE-GIB.} 
\vspace{-1.2em}
\label{framework}
\end{figure}

\subsection{Variable Threshold-Event Triggered Mechanism}\label{VT-ETMtitle}
The practical design of VT-ETM involves two parts. Firstly, to effectively reduce the frequency of event triggers, we use a zero vector $\textbf{o}_k^{(t)}$ as the label and apply Mean-Squared-Error (MSE) as the loss function to minimize the output of an event trigger function $\kappa(\cdot)$. Consequently, it ensures $\kappa(\cdot)$ to become as small as possible (i.e., the minimization of unnecessary triggering). Mathematically, the formula can be shown as
\begin{align}
&\mathcal{L}_{\text{VT-ETM}}(\psi_T) \label{eq:VT-ETM_loss} \\
= & \mathbb{E}_{{i, t}} \left[\parallel\kappa\left\{\mathcal{F}_{\psi_T}\left ( \textbf{o}_{i}^{\left ( t \right ) } , \textbf{o}_{i}^{\left ( \tilde{t_i}  \right ) } \right ) , \hat{\textbf{e}}^{(t-1)} \right\}- \textbf{o}_k^{(t)}\parallel^2 \right],\nonumber
\end{align}
where $\mathcal{F}_{\psi_T}$ denotes an MLP parameterized by $\psi_T$ to process the agent $i$'s observation $\textbf{o}_{i}^{\left ( t \right )}$ and its historical local observation $\textbf{o}_{i}^{\left( \tilde{t_i}  \right) }$ at the last trigger moment $\tilde{t_i}$ that is memorized by Zero-Order Hold (ZOH) \cite{ZOH}. 
Note that each agent $i$ maintains a sequence of event-triggering times $\mathcal{T}_i$ and when the agent determines that its information is semantically useful to other agents, it will add the corresponding time $t$ to the set $\mathcal{T}_i$. Correspondingly, $\tilde{t_i} = \displaystyle\arg \min_{\tau\in \mathcal{T}_i}{\left \{ t-\tau \right \}  }  \in \mathcal{T}_i$. Besides, the function $\kappa(\cdot)$ computes the similarity between the predictive information induced by $\mathcal{F}_{\psi_T}$ and the inferred consensus $\hat{\textbf{e}}^{(t-1)}$. Many similarity metrics can be adopted, such as cosine similarity, Manhattan distance, or Euclidean distance, depending on the characteristics of the data and the specific requirements of the task.
During the centralized training, we add $\psi_T$ to $\Psi = [\psi_D, \psi_M, \psi_A, \psi_E, \psi_T]$\footnote{For simplicity of representation, we slightly abuse the notations here.} for parameter updating.

On the other hand, during decentralized execution, we introduce an exponential function that decreases over time, namely $G_\text{threshold} = c\zeta  ^{t}$, where $c> 0$ and $0<\zeta  < 1$, as the variable threshold. In other words, for agent $i$
\begin{equation}
\mathcal{A}_i \text{ is }
\begin{cases}
\texttt{VALID}, &\!\!\!\! \kappa\left\{\mathcal{F}_{\psi_T}\left ( \textbf{o}_{i}^{\left ( t \right ) } , \textbf{o}_{i}^{\left ( \tilde{t_i}  \right ) } \right ) , \hat{\textbf{e}}^{(t-1)}\right\} > G_{\text{threshold}}; \\
\texttt{VOID}, &\!\!\!\! \kappa\left\{\mathcal{F}_{\psi_T}\left ( \textbf{o}_{i}^{\left ( t \right ) } , \textbf{o}_{i}^{\left ( \tilde{t_i}  \right ) } \right ) , \hat{\textbf{e}}^{(t-1)}\right\} \leq G_{\text{threshold}}.
\end{cases}
    \label{VT-ETM}
\end{equation}
Exploiting a time-decreasing threshold $G_\text{threshold}$ aligns with the branch-out approaches used to mitigate accumulative errors during model rollouts \cite{rollout}, 
progressively compromising the accuracy of predictions. Therefore, an adaptive threshold that encourages increasingly frequent updates can naturally counteract the gradually enlarged errors. 
Finally, if $\mathcal{A}_i$ is \texttt{VOID}, neighboring agents will be unable to receive any exchanged messages from agent $i$. In this case, instead of using the memorized message in \cite{ETCNET2023}, the corresponding elements in Eq. \eqref{Emi} will be replaced by an all-zero vector. Such a setting contributes to allow agents receiving messages infrequently to avoid over-reliance on outdated memorized messages that may hinder their ability to make timely decisions.

\subsection{Graph Information Bottleneck for the Communication Graph and Information Flow Optimization}
As mentioned above, MAGI \cite{MAGI2024} generates excessive communication volume due to the cumbersome interaction with all agents within a specified range. Moreover, the algorithm suffers from high computational complexity caused by the separate compression of the communication graph and data flow. To address this challenge, we propose the GIB-based joint optimization. Specifically, different from MAGI \cite{MAGI2024}, which considers the triplet $\langle$\texttt{Feature}, \texttt{Communication/Graph Information}, \texttt{Explicit Action}$\rangle$, our method incorporates a novel triplet $\langle$\texttt{Feature}, \texttt{Implicit Consensus}, \texttt{Global Observation}$\rangle$ to infer the consensus from data flow and the communication graph simultaneously.

Beforehand, for graph-structured agent features, we introduce a local-dependence assumption to avoid explicitly requiring input data to be i.i.d.

\begin{Assumption}
    For each agent $i$, given the neighbor-related agents within a certain number of hops, the features of the remaining agents are considered independent of the feature of agent $i$. 
\label{assumption}
\end{Assumption}

Contingent on Assumption \ref{assumption}, we discuss the implementations of the GIB-based joint compression of the communication graph and information flow towards a more compact consensus representation. Without loss of generality, the input feature data for the communication learning mechanism based on GNNs can be universally expressed as $\mathcal{D} = (\mathcal{E}, \mathcal{H})$, where $\mathcal{D}$ denotes the aggregated message and the embedding representation $\textbf{E}_{m_i}^{(t)}$ as shown in \eqref{Emi}. Therefore, our primary objective is to compress the consensus $\textbf{h}^{(t)}$ from $\textbf{E}_{m_i}^{(t)}$, while promoting consensus $\textbf{h}^{(t)}$ to closely approximate the target global observation labels $\textbf{o}_g^{(t)}$. Mathematically, we want to minimize $\mathcal{L}_{\text{IB}} = -I(\textbf{h}^{(t)}; \textbf{o}_g^{(t)}) + \eta I(\textbf{h}^{(t)}; \textbf{E}_{m_i}^{(t)})$. Due to the difficulty to know the joint distribution $p(\textbf{h}^{(t)}, \textbf{o}_g^{(t)})$ and $p(\textbf{h}^{(t)}, \textbf{E}_{m_i}^{(t)})$, by Lemma \ref{I_NWJ1}, we have the following theorem.

\begin{lemma}[Nguyen, Wainright \& Jordan’s bound \cite{GIB}]
\label{I_NWJ1}
For two random variables $X$ and $Y$, 
\begin{align}
I\left(Y ; X\right) &\geq 1+\mathbb{E}_{p(Y)}\left[\log \frac{\prod_{i \in \mathcal{N}} p_{1}\left(Y_i \mid X_{i}\right)}{p_{2}(Y)}\right]\\
&-\mathbb{E}_{p(Y) p\left(X\right)}\left[\frac{\prod_{i \in \mathcal{N}} p_{1}\left(Y_{i} \mid X_i\right)}{p_{2}(Y)}\right].\nonumber
\end{align} 
\end{lemma}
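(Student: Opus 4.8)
The plan is to recognize the claimed inequality as a specialization of the Nguyen--Wainwright--Jordan variational lower bound on the Kullback--Leibler divergence, instantiated on a factorized variational family. First I would rewrite the mutual information as a divergence between the joint law and the product of marginals, $I(Y;X) = D_{\mathrm{KL}}\!\left(p(X,Y)\,\|\,p(X)p(Y)\right)$, so that the task reduces to exhibiting a valid lower bound on this divergence.

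The core tool is the Fenchel (convex-conjugate) representation of $D_{\mathrm{KL}}$. Viewing it as the $f$-divergence generated by $f(u)=u\log u$, I would compute its convex conjugate $f^{*}(t)=\sup_{u>0}\{tu-u\log u\}=e^{t-1}$, which yields the pointwise Young--Fenchel inequality $u\log u \ge tu - e^{t-1}$ valid for every $t\in\mathbb{R}$ and $u>0$. Setting $u=\tfrac{dP}{dQ}$ with $P=p(X,Y)$ and $Q=p(X)p(Y)$, letting $t=T(X,Y)$ be an arbitrary measurable test function, and integrating both sides against $Q$ produces the variational bound $D_{\mathrm{KL}}(P\|Q)\ge \mathbb{E}_{P}[T]-\mathbb{E}_{Q}[e^{T-1}]$, which is tight at the optimizer $T^{\star}=1+\log\tfrac{dP}{dQ}$.

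It then remains to instantiate $T$ with the chosen factorized surrogate. I would set $T(X,Y)=1+\log\frac{\prod_{i\in\mathcal{N}}p_{1}(Y_i\mid X_i)}{p_{2}(Y)}$, so that $e^{T-1}=\frac{\prod_{i\in\mathcal{N}}p_{1}(Y_i\mid X_i)}{p_{2}(Y)}$. Substituting into the variational bound, the constant contributes $\mathbb{E}_{P}[1]=1$, the logarithmic term becomes the first expectation in the statement, and the $\mathbb{E}_{Q}[e^{T-1}]$ term becomes the subtracted expectation under $p(Y)p(X)$, reproducing the claimed inequality exactly.

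The main obstacle is less the algebra than the careful justification of the variational step: verifying that the conjugate of $u\log u$ is indeed $e^{t-1}$, confirming that the pointwise inequality integrates validly (measurability and finiteness of the relevant integrals, and that the product form $\prod_{i\in\mathcal{N}}p_{1}(Y_i\mid X_i)$ constitutes an admissible test function), and being explicit about the slight notational compression whereby the first expectation, although written $\mathbb{E}_{p(Y)}$, is in fact taken over the joint $p(X,Y)$ since its integrand depends on $X$ through the factors $p_{1}(Y_i\mid X_i)$.
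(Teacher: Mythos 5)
Your proof is correct, but note that the paper itself offers no proof of this lemma: it is imported verbatim by citation to the GIB reference and is used purely as a black box inside the proof of Theorem \ref{theorem:gib}. Your derivation --- rewriting $I(Y;X)$ as $D_{\mathrm{KL}}\bigl(p(X,Y)\,\|\,p(X)p(Y)\bigr)$, invoking the Young--Fenchel inequality for $f(u)=u\log u$ with conjugate $f^{*}(t)=e^{t-1}$, integrating the pointwise inequality against $Q=p(X)p(Y)$ to obtain $D_{\mathrm{KL}}(P\|Q)\ge\mathbb{E}_{P}[T]-\mathbb{E}_{Q}[e^{T-1}]$, and then choosing the factorized test function $T=1+\log\bigl(\prod_{i}p_{1}(Y_i\mid X_i)/p_{2}(Y)\bigr)$ --- is exactly the standard Nguyen--Wainwright--Jordan argument, and it reproduces the stated bound term by term. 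What your route buys is self-containedness and a clear account of where the slack comes from (the Fenchel gap when $T$ differs from the optimizer $1+\log\tfrac{dP}{dQ}$); what the paper's bare citation buys is brevity. You are also right to flag the notational compression: as printed, the first expectation $\mathbb{E}_{p(Y)}$ must in fact be read as an expectation over the joint $p(X,Y)$, since its integrand depends on $X$ through the factors $p_{1}(Y_i\mid X_i)$; the same compression appears in the cited source, so this is a faithful reading rather than a correction. The only point worth making fully explicit is the integrability caveat you already raise: if $\mathbb{E}_{p(X)p(Y)}\bigl[e^{T-1}\bigr]=+\infty$, the right-hand side is $-\infty$ and the inequality holds vacuously, so admissibility of the product test function costs no generality.
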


\begin{theorem}
\label{theorem:gib}
 The GIB can be bounded by
    \begin{align}
    \mathcal{L}_{\text{IB}} \leq& \mathbb{E}_{{i, t}} \left[ D_{\text{KL}} \left( p(\textbf{o}_g^{(t)}) \parallel p(\textbf{h}^{(t)}) \right) \right] \label{IB}\\
     &\underbrace{+ \eta\mathbb{E}_{p(\textbf{E}_{m_i}^{(t)})} \left[ D_{\text{KL}} \left( p(\textbf{h}^{(t)} \mid \textbf{E}_{m_i}^{(t)}) \parallel q_{\textbf{h}}(\textbf{h}^{(t)}) \right) \right]}_{\mathcal{L}_\text{KL}}, \nonumber
    \end{align}
    where $q_{\textbf{h}}(\textbf{h}^{(t)}) $ denotes a probability function sharing the same variable space as $p(\textbf{h}^{(t)} \mid \textbf{E}_{m_i}^{(t)})$.
\end{theorem}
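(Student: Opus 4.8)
The plan is to split the bottleneck objective into its two mutual-information terms, $\mathcal{L}_{\text{IB}} = -I(\textbf{h}^{(t)}; \textbf{o}_g^{(t)}) + \eta I(\textbf{h}^{(t)}; \textbf{E}_{m_i}^{(t)})$, and to bound each separately, since the relevance term and the compression term call for opposite variational inequalities. The difficulty in both cases is that the joint distributions $p(\textbf{h}^{(t)}, \textbf{o}_g^{(t)})$ and $p(\textbf{h}^{(t)}, \textbf{E}_{m_i}^{(t)})$ are intractable, so each term must be replaced by a surrogate built from a tractable variational density.

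For the compression term $\eta I(\textbf{h}^{(t)}; \textbf{E}_{m_i}^{(t)})$, I would apply the standard variational upper bound on mutual information. Starting from $I(\textbf{h}^{(t)}; \textbf{E}_{m_i}^{(t)}) = \mathbb{E}_{p(\textbf{E}_{m_i}^{(t)})}[D_{\text{KL}}(p(\textbf{h}^{(t)} \mid \textbf{E}_{m_i}^{(t)}) \parallel p(\textbf{h}^{(t)}))]$ and introducing the auxiliary marginal $q_{\textbf{h}}(\textbf{h}^{(t)})$, the identity $\mathbb{E}_{p(\textbf{E}_{m_i}^{(t)})}[D_{\text{KL}}(p(\textbf{h}^{(t)} \mid \textbf{E}_{m_i}^{(t)}) \parallel q_{\textbf{h}})] = I(\textbf{h}^{(t)}; \textbf{E}_{m_i}^{(t)}) + D_{\text{KL}}(p(\textbf{h}^{(t)}) \parallel q_{\textbf{h}})$ combined with the nonnegativity of the KL divergence yields precisely the $\mathcal{L}_{\text{KL}}$ term once I multiply through by $\eta > 0$. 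This step is routine and requires no extra assumptions.

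For the relevance term $-I(\textbf{h}^{(t)}; \textbf{o}_g^{(t)})$, I would instead invoke Lemma~\ref{I_NWJ1} with $Y = \textbf{o}_g^{(t)}$ and $X = \textbf{h}^{(t)}$, giving a lower bound on $I$ and hence an upper bound on $-I$. The key move is to select the variational densities so the bound degenerates to the target KL divergence: set the denominator $p_2(Y) = p(\textbf{o}_g^{(t)})$ and take the factorized numerator $\prod_i p_1(Y_i \mid X_i) = p(\textbf{h}^{(t)})$, where the product form is licensed by the local-dependence Assumption~\ref{assumption} and the identification of $p(\textbf{h}^{(t)})$ on the same support as $p(\textbf{o}_g^{(t)})$ is justified by the consensus-establishment loss \eqref{CE}, which trains $\textbf{h}^{(t)}$ to estimate $\textbf{o}_g^{(t)}$. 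With the resulting density ratio equal to $p(\textbf{h}^{(t)})/p(\textbf{o}_g^{(t)})$, the logarithmic expectation in Lemma~\ref{I_NWJ1} becomes $-D_{\text{KL}}(p(\textbf{o}_g^{(t)}) \parallel p(\textbf{h}^{(t)}))$, while the final expectation reduces to the total mass of $p(\textbf{h}^{(t)})$, namely one, which cancels the leading constant. Adding the two bounds and taking the expectation over $i$ and $t$ then reproduces the statement.

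I expect this last step to be the main obstacle. Establishing the compression bound is mechanical, but forcing the bound of Lemma~\ref{I_NWJ1} to collapse exactly onto $D_{\text{KL}}(p(\textbf{o}_g^{(t)}) \parallel p(\textbf{h}^{(t)}))$ requires two nontrivial justifications: the cancellation of the additive constants via the normalization of the variational numerator, and the legitimacy of comparing $p(\textbf{o}_g^{(t)})$ and $p(\textbf{h}^{(t)})$ on a shared variable space, which leans on Assumption~\ref{assumption} for the factorized conditional and on the supervised alignment induced by \eqref{CE}. Any looseness introduced by this particular variational choice is absorbed into the inequality, so the bound remains valid even when the identification is only approximate.
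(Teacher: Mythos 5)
Your proposal is correct and takes essentially the same route as the paper: the same two-term decomposition of $\mathcal{L}_{\text{IB}}$, Lemma~\ref{I_NWJ1} for the relevance term, and the variational KL upper bound for the compression term (your exact identity plus nonnegativity of KL is precisely the paper's Gibbs'-inequality step (c)). The only difference is local: where the paper's inequality (b) discards the residual NWJ term by asserting the condition $1-\mathbb{E}_{p(\textbf{o}_g^{(t)})p(\textbf{h}^{(t)})}\left[\cdot\right]>0$, you cancel it exactly via the unit normalization of the variational density on the shared variable space --- a slightly tighter justification of the same step, not a different route.
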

\begin{proof}
    \begin{align}
    &I\left(\textbf{h}^{(t)}; \textbf{o}_g^{(t)}\right) \nonumber                   \\
    \overset{(a)}{\geq}&1+\mathbb{E}_{{p}(\textbf{o}_g^{(t)})}\left[\log \frac{ p(\mathcal{F}_{\psi_E}\left(\textbf{h}^{(t)}\right))}{p\left(\textbf{o}_g^{(t)}\right)}\right]      \nonumber \\
    &\quad -\mathbb{E}_{p{(\textbf{o}_g^{(t)})}p{(\textbf{h}^{(t)})}}\left[\frac{ p(\mathcal{F}_{\psi_E}\left(\textbf{h}^{(t)}\right))}{p\left(\textbf{o}_g^{(t)}\right)}\right]    \label{lower_IB}     \\
    \overset{(b)}{\geq} &- \mathbb{E}_{{p}(\textbf{o}_g^{(t)})}\left[\log \frac{p\left(\textbf{o}_g^{(t)}\right)}{p(\mathcal{F}_{\psi_E}\left(\textbf{h}^{(t)}\right))}\right]    \nonumber   \\
    =&- \mathbb{E}_{{t}} \left[ D_{\text{KL}} \left( p(\textbf{o}_g^{(t)}) \parallel p(\mathcal{F}_{\psi_E}(\textbf{h}^{(t)})) \right) \right], \nonumber
    \end{align}
    where the inequality (a) uses the Nguyen, Wainright \& Jordan’s bound $I_\text{NWJ}$ \cite{GIB} and the inequality (b) is derived from the condition $1-\mathbb{E}_{p{(\textbf{o}_g^{(t)})}p{(\textbf{h}^{(t)})}}\left[\frac{ p(\mathcal{F}_{\psi_E}\left(\textbf{h}^{(t)}\right))}{p\left(\textbf{o}_g^{(t)}\right)}\right] > 0$. 
    On the other hand,
    \begin{align}
    &I(\textbf{h}^{(t)}; \textbf{E}_{m_i}^{(t)}) \nonumber \\
    =&\iint p(\textbf{h}^{(t)},\textbf{E}_{m_i}^{(t)})\log \frac{p(\textbf{h}^{(t)},\textbf{E}_{m_i}^{(t)})}{p(\textbf{h}^{(t)})p(\textbf{E}_{m_i}^{(t)})} \, d\textbf{h}^{(t)} \, d\textbf{E}_{m_i}^{(t)}  \label{upper_IB}  \\
    =& \iint p(\textbf{E}_{m_i}^{(t)}) p(\textbf{h}^{(t)} \mid \textbf{E}_{m_i}^{(t)}) \log \frac{p(\textbf{h}^{(t)} \mid \textbf{E}_{m_i}^{(t)})}{p(\textbf{h}^{(t)})} \, d\textbf{h}^{(t)} \, d\textbf{E}_{m_i}^{(t)} \nonumber \\
    \overset{(c)}{\leq}& \iint p(\textbf{E}_{m_i}^{(t)}) p(\textbf{h}^{(t)} \mid \textbf{E}_{m_i}^{(t)}) \log \frac{p(\textbf{h}^{(t)} \mid \textbf{E}_{m_i}^{(t)})}{q_{\textbf{h}}(\textbf{h}^{(t)})} \, d\textbf{h}^{(t)} \, d\textbf{E}_{m_i}^{(t)} \nonumber \\
    =& \mathbb{E}_{p(\textbf{E}_{m_i}^{(t)})} \left[ D_{\text{KL}} \left( p(\textbf{h}^{(t)} \mid \textbf{E}_{m_i}^{(t)}) \parallel q_{\textbf{h}}(\textbf{h}^{(t)}) \right) \right],\nonumber 
    \end{align} 
where the inequality (c) originates from Gibbs' inequality, where $p \log p \geq p \log q$, with equality if and only if $p$ and $q$ are the same distribution. To further estimate $p(\textbf{h}^{(t)})$, we treat $q_{\textbf{h}}(\textbf{h}^{(t)})$ as the variational approximation.

\end{proof}
Compared to Eq. \eqref{CE}, Theorem \ref{theorem:gib} unveils the impact of GIB on the consensus-building module. In a nutshell, the loss function of CDE-GIB can be expressed as,
\begin{equation}
\mathcal{L}_{\text{CDE-IB}}(\Theta, \phi, \Psi) = -J_{\text{MAPPO}} + \mathcal{L}_{\text{CE}} + \varrho\mathcal{L}_{\text{VT-ETM}} + \rho\mathcal{L}_{\text{GIB}}.
    \label{loss}
\end{equation}

However, it remains difficult to compute the KL divergence directly during training. Fortunately, we have the following lemma. 

\begin{lemma}
[Ref. \cite{GMM}]
    \label{lem:guassian_kl}
    Considering $f(\mathbf{x})$ and $g(\mathbf{x})$ ($\mathbf{x} \in \mathbb{R}^K$) as Gaussian distributions, that is,
    \begin{equation}
        f(\mathbf{x}) = \mathcal{N}(\bm{\mu}_{m}, \bm{\Sigma}_{m}),\ 
        g(\mathbf{x}) =  \mathcal{N}(\bm{\mu}_{l}, \bm{\Sigma}_{l}),
    \end{equation}
    where $\bm{\mu} = [\mu^1,\cdots,\mu^K]$\footnote{For simplicity of representation, we omit the subscript $m$ and $l$ for $\bm{\mu}$ and $\bm{\Sigma}$ here.} and $\bm{\Sigma} = \text{diag}[(\sigma^1)^2,\cdots,(\sigma^K)^2]$, 
    the KL divergence between these two distributions can be computed as
    \begin{align}
        &\mathbb{E} \left[ D_{\text{KL}} \left( f(\mathbf{x}) \parallel g(\mathbf{x}) \right) \right] \label{eq:kl_guassian_guassian}\\ =&\mathbb{E}\left[ \sum_{k=1}^{K} \left(\log \frac{\sigma_{m}^k}{\sigma_{l}^k}+\frac{\left(\sigma_{m}^k\right)^{2}+\left(\mu_{m}^k-\mu_{l}^k\right)^{2}}{2\left(\sigma_{l}^k\right)^{2}}-\frac{1}{2}\right)\right].\nonumber
    \end{align}
\end{lemma}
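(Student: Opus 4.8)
The plan is to exploit the diagonal structure of $\bm{\Sigma}_m$ and $\bm{\Sigma}_l$ to collapse the $K$-dimensional integral into $K$ independent one-dimensional computations, and then to evaluate each univariate KL divergence in closed form using only the first two moments of a Gaussian. First I would start from the definition $D_{\text{KL}}(f \parallel g) = \mathbb{E}_{f}\left[\log\frac{f(\mathbf{x})}{g(\mathbf{x})}\right]$. Because both covariance matrices are diagonal, each density factorizes across coordinates, $f(\mathbf{x}) = \prod_{k=1}^{K} f_k(x^k)$ and $g(\mathbf{x}) = \prod_{k=1}^{K} g_k(x^k)$, where $f_k = \mathcal{N}(\mu_m^k, (\sigma_m^k)^2)$ and $g_k = \mathcal{N}(\mu_l^k, (\sigma_l^k)^2)$. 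Taking the logarithm turns the ratio into a sum $\sum_k \log\frac{f_k(x^k)}{g_k(x^k)}$, and since the coordinates are mutually independent under $f$, the expectation distributes over the sum, giving the clean decomposition $D_{\text{KL}}(f \parallel g) = \sum_{k=1}^{K} D_{\text{KL}}(f_k \parallel g_k)$. This reduction is the only place the diagonal-covariance hypothesis is genuinely used.

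Next I would evaluate a single univariate term. Substituting the two Gaussian densities, the normalizing $\log(2\pi)$ constants cancel and one is left with $\log\frac{f_k(x)}{g_k(x)} = \log\frac{\sigma_l^k}{\sigma_m^k} - \frac{(x-\mu_m^k)^2}{2(\sigma_m^k)^2} + \frac{(x-\mu_l^k)^2}{2(\sigma_l^k)^2}$. Taking the expectation under $f_k$ then requires only two moment identities: the self-centered second moment $\mathbb{E}_{f_k}[(x-\mu_m^k)^2] = (\sigma_m^k)^2$, and the cross term obtained by the bias–variance split $x - \mu_l^k = (x-\mu_m^k) + (\mu_m^k - \mu_l^k)$, which yields $\mathbb{E}_{f_k}[(x-\mu_l^k)^2] = (\sigma_m^k)^2 + (\mu_m^k - \mu_l^k)^2$. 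Collecting the three contributions produces the summand $\log\frac{\sigma_l^k}{\sigma_m^k} + \frac{(\sigma_m^k)^2 + (\mu_m^k - \mu_l^k)^2}{2(\sigma_l^k)^2} - \frac{1}{2}$, and summing over $k$ recovers Eq.~\eqref{eq:kl_guassian_guassian}.

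Since every step is a routine manipulation, there is no substantive obstacle; the only points demanding care are the bias–variance expansion of the cross term and the bookkeeping of the log-ratio of standard deviations. The one caveat I would flag before finalizing concerns the sign of the logarithmic term: a direct evaluation of $D_{\text{KL}}(f \parallel g)$ with $f$ as the first argument produces $\log\frac{\sigma_l^k}{\sigma_m^k}$, which is the reciprocal of the $\log\frac{\sigma_m^k}{\sigma_l^k}$ appearing in the statement. The variance and mean terms are already consistent with $D_{\text{KL}}(f\parallel g)$, so I would verify which ordering of $f$ and $g$ the intended divergence uses and reconcile the log term accordingly, treating the displayed expression as correct up to this orientation convention.
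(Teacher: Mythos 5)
Your derivation is correct and complete: factorizing both densities across coordinates via the diagonal covariances, reducing the multivariate KL divergence to a sum of univariate ones, and evaluating each univariate term with the bias--variance split $x-\mu_l^k=(x-\mu_m^k)+(\mu_m^k-\mu_l^k)$ is exactly the standard argument. Note that the paper itself offers no proof of this lemma at all --- it imports the result by citation from Ref.~\cite{GMM} --- so your self-contained computation is strictly more than what the paper provides, and it is the natural elementary route. Your closing caveat is also well founded and identifies a genuine typo in the paper's statement: a direct evaluation gives
\begin{equation*}
D_{\text{KL}}\left( f \parallel g \right) = \sum_{k=1}^{K}\left(\log\frac{\sigma_l^k}{\sigma_m^k} + \frac{\left(\sigma_m^k\right)^2 + \left(\mu_m^k - \mu_l^k\right)^2}{2\left(\sigma_l^k\right)^2} - \frac{1}{2}\right),
\end{equation*}
whose variance and mean-shift terms agree with Eq.~\eqref{eq:kl_guassian_guassian} but whose logarithmic term is the reciprocal of the printed $\log\frac{\sigma_m^k}{\sigma_l^k}$. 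In fact, the printed expression is internally inconsistent: its log term matches $D_{\text{KL}}(g \parallel f)$ while its quadratic term matches $D_{\text{KL}}(f \parallel g)$, so no ordering of the two arguments makes the displayed formula exact; the correct reading is the one you derived, with $\log\frac{\sigma_l^k}{\sigma_m^k}$ in place of $\log\frac{\sigma_m^k}{\sigma_l^k}$.
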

Contingent on the following Assumption \ref{assump:guassian}, we can have a corollary to faciliate the computations of GIB, which lays the very foundation for computing GIB across batches. 
\begin{Assumption}[Consistent with Ref. \cite{TOCF}]
Taking a batch of collected data, we assume $\textbf{o}_g^{(t)}$, $\mathcal{F}_{\psi_E}(\textbf{h}^{(t)})$, and $\textbf{h}^{(t)} $ satisfy the following Gaussian distributions,
\begin{align}
\label{eq:guassian}
    p(\textbf{o}_g^{(t)}) =& \mathcal{N}(\bm{\mu}_{\textbf{o}_g^{(t)}},\bm{\Sigma}_{\textbf{o}_g^{(t)}}), \nonumber\\ 
   p(\mathcal{F}_{\psi_E}(\textbf{h}^{(t)})) =& \mathcal{N}(\bm{\mu}_{\mathcal{F}_{\psi_E}(\textbf{h}^{(t)})},\bm{\Sigma}_{\mathcal{F}_{\psi_E}(\textbf{h}^{(t)})}),\\
   p(\textbf{h}^{(t)} ) =& \mathcal{N}(\bm{\mu}_{\textbf{h}^{(t)} \mid \textbf{E}_{m_i}^{(t)}},\bm{\Sigma}_{\textbf{h}^{(t)} \mid \textbf{E}_{m_i}^{(t)}}),\nonumber
\end{align}
where 
$\bm{\mu}_{\textbf{o}_g^{(t)}},\ \bm{\mu}_{\mathcal{F}_{\psi_E}(\textbf{h}^{(t)})},\ \bm{\sigma}_{\textbf{o}_g^{(t)}},\ \bm{\sigma}_{\mathcal{F}_{\psi_E}(\textbf{h}^{(t)})} \in \mathbb{R}^{K_1}$, and 
$\bm{\mu}_{\textbf{h}^{(t)} \mid \textbf{E}_{m_i}^{(t)}},\ \bm{\sigma}_{\textbf{h}^{(t)} \mid \textbf{E}_{m_i}^{(t)}} \in \mathbb{R}^{K_2}$.

\label{assump:guassian}
\end{Assumption}

\begin{corollary}
\label{coro:gib}
    Contingent on Assumption \ref{assump:guassian}, if the variational approximation $q_{\textbf{h}}(\textbf{h}^{(t)}) = \mathcal{N}(\mathbf{0}, \mathbf{I})$, the GIB bound can be computed as Eq. \eqref{eq:gib_opt}.
\end{corollary}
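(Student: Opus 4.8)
The plan is to reduce the right-hand side of Theorem~\ref{theorem:gib} to a closed-form, batch-computable expression by applying the Gaussian KL formula of Lemma~\ref{lem:guassian_kl} separately to each of its two KL-divergence terms, using the distributional forms posited in Assumption~\ref{assump:guassian}. The whole argument is a double substitution: once the two bottleneck terms are recognized as KL divergences between diagonal Gaussians, Eq.~\eqref{eq:kl_guassian_guassian} turns each of them into a finite per-coordinate sum with no remaining integrals, and the hypothesis $q_{\textbf{h}}(\textbf{h}^{(t)}) = \mathcal{N}(\mathbf{0},\mathbf{I})$ is what makes the second term collapse to its familiar compact form.

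First I would treat $\mathbb{E}_{i,t}[D_{\text{KL}}(p(\textbf{o}_g^{(t)}) \parallel p(\mathcal{F}_{\psi_E}(\textbf{h}^{(t)})))]$, i.e.\ the first summand in the bound, using the refined form that actually appears at the end of the proof of Theorem~\ref{theorem:gib} rather than the shorthand $p(\textbf{h}^{(t)})$ written in the statement. By Assumption~\ref{assump:guassian}, both $p(\textbf{o}_g^{(t)})$ and $p(\mathcal{F}_{\psi_E}(\textbf{h}^{(t)}))$ are diagonal Gaussians on $\mathbb{R}^{K_1}$, so I identify $f \leftarrow p(\textbf{o}_g^{(t)})$ and $g \leftarrow p(\mathcal{F}_{\psi_E}(\textbf{h}^{(t)}))$ in Lemma~\ref{lem:guassian_kl}, setting $(\bm{\mu}_m,\bm{\Sigma}_m) = (\bm{\mu}_{\textbf{o}_g^{(t)}}, \bm{\Sigma}_{\textbf{o}_g^{(t)}})$ and $(\bm{\mu}_l,\bm{\Sigma}_l) = (\bm{\mu}_{\mathcal{F}_{\psi_E}(\textbf{h}^{(t)})}, \bm{\Sigma}_{\mathcal{F}_{\psi_E}(\textbf{h}^{(t)})})$, and substitute into Eq.~\eqref{eq:kl_guassian_guassian} with $K = K_1$. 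This yields a coordinate-wise sum of log-ratio, squared-mean-gap, and variance terms with no free integrals remaining.

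Next I would handle $\eta\,\mathbb{E}_{p(\textbf{E}_{m_i}^{(t)})}[D_{\text{KL}}(p(\textbf{h}^{(t)} \mid \textbf{E}_{m_i}^{(t)}) \parallel q_{\textbf{h}}(\textbf{h}^{(t)}))]$, the term labelled $\mathcal{L}_{\text{KL}}$. Here I set $f \leftarrow p(\textbf{h}^{(t)} \mid \textbf{E}_{m_i}^{(t)}) = \mathcal{N}(\bm{\mu}_{\textbf{h}^{(t)} \mid \textbf{E}_{m_i}^{(t)}}, \bm{\Sigma}_{\textbf{h}^{(t)} \mid \textbf{E}_{m_i}^{(t)}})$ on $\mathbb{R}^{K_2}$ and invoke the corollary's hypothesis $q_{\textbf{h}}(\textbf{h}^{(t)}) = \mathcal{N}(\mathbf{0},\mathbf{I})$, i.e.\ $\mu_l^k = 0$ and $\sigma_l^k = 1$ for every $k$. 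The reference moments trivialize both the $\log\sigma_l^k$ contribution and the denominator $2(\sigma_l^k)^2$, so Eq.~\eqref{eq:kl_guassian_guassian} collapses to a bare sum over $k = 1,\dots,K_2$ of the variance terms $\tfrac{1}{2}(\sigma^k)^2$, the squared means $\tfrac{1}{2}(\mu^k)^2$, and the log-variances, evaluated at the moments of $p(\textbf{h}^{(t)} \mid \textbf{E}_{m_i}^{(t)})$ — precisely the standard KL of a diagonal Gaussian against the unit Gaussian.

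Finally I would recombine the two closed forms, re-attaching the weight $\eta$ and the batch expectation $\mathbb{E}_{i,t}$ (which under Assumption~\ref{assump:guassian} is realized as the empirical mean over the sampled batch, consistent with Ref.~\cite{TOCF}), to arrive at Eq.~\eqref{eq:gib_opt}. I do not expect a genuine analytic obstacle, since the derivation is just Lemma~\ref{lem:guassian_kl} applied twice; the points that truly require care are bookkeeping rather than mathematics, namely keeping the two latent dimensions $K_1$ and $K_2$ distinct across the two terms, matching each distribution to the correct $(f,g)$ slot so that the KL direction is not inadvertently reversed, and reconciling the statement's shorthand $p(\textbf{h}^{(t)})$ with the $p(\mathcal{F}_{\psi_E}(\textbf{h}^{(t)}))$ that is actually bounded in the proof of Theorem~\ref{theorem:gib}.
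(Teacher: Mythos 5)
Your proposal is correct and follows essentially the same route as the paper, whose entire proof is the one-line remark that the corollary "can be easily obtained by applying Assumption \ref{assump:guassian} and Lemma \ref{lem:guassian_kl} in Theorem \ref{theorem:gib}" --- precisely the double substitution you carry out, including the correct identification of the first KL term with $p(\mathcal{F}_{\psi_E}(\textbf{h}^{(t)}))$ rather than the statement's shorthand $p(\textbf{h}^{(t)})$, and the collapse of the second term under $q_{\textbf{h}}(\textbf{h}^{(t)}) = \mathcal{N}(\mathbf{0}, \mathbf{I})$. The only residual mismatches with Eq. \eqref{eq:gib_opt} --- the single constant $-1$ in place of the accumulated per-coordinate $-\tfrac{1}{2}$ terms, and the silently dropped coefficient $\eta$ on the second sum --- are imprecisions in the paper's own displayed equation (harmless to the inequality direction when $K_1 + \eta K_2 \geq 2$), not gaps in your argument.
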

The corollary can be easily obtained by applying Assumption \ref{assump:guassian} and Lemma \ref{lem:guassian_kl} in Theorem \ref{theorem:gib}. In summary, the training procedure for CDE-GIB is presented in Algorithm \ref{algorithm}.

\begin{figure*}
\begin{equation}
\begin{aligned}
    \mathcal{L}_{\text{GIB}}
    \leq&\mathbb{E}\left[ \sum_{k=1}^{K_1} \left(\log \frac{\sigma_{\textbf{o}_g^{(t)}}^k}{\sigma_{\mathcal{F}_{\psi_E}(\textbf{h}^{(t)})}^k}+ \frac{\left(\sigma_{\textbf{o}_g^{(t)}}^k\right)^{2}+\left(\mu_{\textbf{o}_g^{(t)}}^k-\mu_{\mathcal{F}_{\psi_E}(\textbf{h}^{(t)})}^k\right)^{2}}{2\left(\sigma_{\mathcal{F}_{\psi_E}(\textbf{h}^{(t)})}^k\right)^{2}}\right)\right] \\
    &+\mathbb{E}\left[ \sum_{k=1}^{K_2} \left(\log \sigma_{\textbf{h}^{(t)} \mid \textbf{E}_{m_i}^{(t)}}^k
    +\frac{\left(\sigma_{\textbf{h}^{(t)} \mid \textbf{E}_{m_i}^{(t)}}^k\right)^{2}+\left(\mu_{\textbf{h}^{(t)} \mid \textbf{E}_{m_i}^{(t)}}^k\right)^{2}}{2}\right)\right]-1.
\end{aligned}\label{eq:gib_opt}
\end{equation}
\hrulefill
\vspace{-1.5em}
\end{figure*}

\newcommand{\INITIALIZE}{\item[\textbf{Initialize:}]}
\addtolength{\topmargin}{0.05in}
\begin{algorithm}[tbp]\small
  \caption{The Training of CDE-GIB}
  \label{algorithm}
  \begin{algorithmic}[1]
    \INITIALIZE {The length of episodes $T$, variance constant $\sigma$, the actor and critic network with random parameters $\Theta, \Psi, \phi$ and the replay memory $\mathcal{B} \leftarrow \varnothing$;}
    \FOR {each train epoch}
        \STATE $\text{Clone}$ $\Theta_{\text{old}} \leftarrow \Theta$, $\Psi_{\text{old}} \leftarrow \Psi$, $\phi_{\text{old}} \leftarrow \phi$;
        \STATE Initialize the environment with $N$ agents;
        \FOR {$t=\{1, \cdots, T\}$}
            \FOR {\textbf{each agent $i$}}
            \STATE $\textbf{z}_i^{(t)} \leftarrow $ obtains a local state $\textbf{z}_i^{(t)} =\left(\textbf{o}_i^{(t)}, \left\{\textbf{m}_j^{(t)} \mid j \in \xi_i^{(t)}\right\}\right)$;
            \STATE $\mathcal{A}_i,\textbf{E}_{m_{i}}^{(t)} \leftarrow $ calculates triggering behavior $\mathcal{A}_i$ by Eq. \eqref{VT-ETM} and generates encoded information $\mathbf{E}_{m_{i}}^{(t)}$ by Eq. \eqref{Emi};
            \STATE $\textbf{h}^{(t)}, \textbf{m}_i^{(t+1)}, \textbf{u}_i^{(t)} \leftarrow $ establishes the consensus $\textbf{h}^{(t)}$ by Eq. \eqref{H} and computes the message $\textbf{m}_i^{(t+1)}$ by Eq. \eqref{mt+1} to sample an action $\textbf{u}_i^{(t)} \sim \text{Normal}(\mu^{(t)}_{\text{i, old}}, \sigma^2)$ by Eq. \eqref{action};
            \ENDFOR
        \STATE $r^{(t)} , V_{\phi_{old}}(\textbf{s}^{(t)}) ,\textbf{s}^{(t+1)}\leftarrow $ obtain the reward $r^{(t)}$, state value $V_{\phi_{old}}(\textbf{s}^{(t)})$ and $\textbf{s}^{(t+1)}$;
        \ENDFOR
        \STATE For each time-step $t$, each agent calculates $\mu^{(t)}_i$, $\hat{\textbf{e}}^{(t)}$ based on $\textbf{z}^{(t)}_i$ by Eq. \eqref{Emi}-\eqref{IB}, and obtains $V_{\phi}(\textbf{s}^{(t)})$;
        \STATE Update $\Theta, \Psi, \phi$ according to Eq. \eqref{loss} via Adam optimizer;
    \ENDFOR
  \end{algorithmic}
\end{algorithm}
\section{Simulation Settings and Results}\label{sec4_Experiment}
\subsection{Simulation Settings}
In this section, we evaluate the performance of CDE-GIB in terms of executing the decentralized formation control task \cite{xiang2023decentralized} in the multi-agent particle environment \cite{MPE2017}. Notably, we use cosine similarity during computing $\kappa$ in Eq. \eqref{eq:VT-ETM_loss}. 
Additionally, the reward function $\mathcal{R}^{(t)}$ is defined as a weighted combination of task-oriented reward $\mathcal{R}_t^{(t)}$ and the event-triggered reward $\mathcal{R}_m^{(t)}$. Specifically, consistent with \cite{xiang2023decentralized}, the task-oriented reward $\mathcal{R}_t^{(t)}$, which contains the formation completeness, individual navigation distance, and penalty on collision arising from decentralized control, evaluates the agents' efficacy in executing specified tasks. Meanwhile, the event-triggered reward $\mathcal{R}_m^{(t)}$ imposes a penalty on agents for transmitting information. 
Mathematically, 
\begin{equation}
    \mathcal{R}^{(t)} = \omega_k \mathcal{R}_t^{(t)} + \omega_m \mathcal{R}_m^{(t)},
\end{equation}
where the coefficients $ \omega_t, \omega_m $ represent the corresponding weights. We also evaluate the distributed consensus establishment method ConsMAC\cite{xiang2023decentralized}, the attention-based message aggregation approach TarMAC \cite{Tarmac2019}, and the state-of-the-art supervised learning-based information extraction algorithm MASIA\cite{masia2022} as baselines. The key parameters are summarized in Table \ref{para}.

\begin{table}[t!]
    \centering
    \caption{The Key Parameter Settings of The Environment.}
    \vspace{-.5em}
    \label{para}
    \begin{tabular}{@{}c|c|c@{}}
        \toprule
        \textbf{Environment Parameters} & \textbf{Symbol}& \textbf{Value} \\ 
        \midrule
        Number of UAVs & $N$ & $7$ \\
        Maximum observation distance & $\delta _{obs}$ & $3$ m \\
        Destination & $\Delta_p$ & $(0, 10)$ m\\
        Discount factor & $\gamma $ & $0.8$ \\
        GAE factor & $\lambda $ & $0.95$\\
        The range of acceleration & $\textbf{u}_i^{(t)}$ & $[-0.5, 0.5] $ m/s\textsuperscript{2}\\
        The range of position & $\textbf{p}_i^{(0)},\textbf{p}_j^{(0)}$ & $[-2, 2]$ m\\
        Reward function coefficients & $\omega_k,\omega_m$ &  $1$, $0.1$ \\
        \bottomrule
    \end{tabular}
    \vspace{-1.5em}
\end{table}
\subsection{Simulation Results}
\begin{figure}[tp]
    \centering
    \includegraphics[scale=0.45]{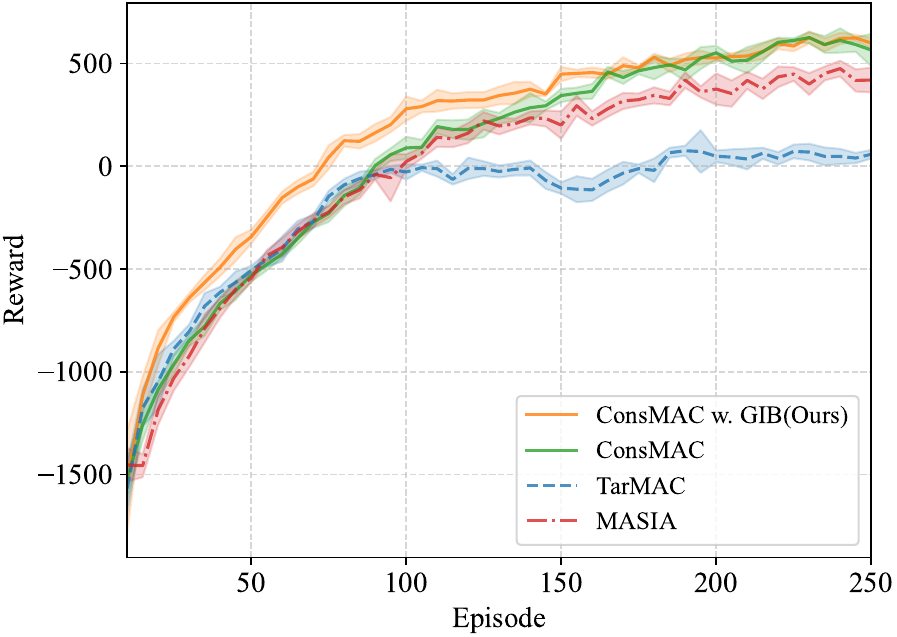}
    \vspace{-0.2cm}
\caption{Learning curves of consensus algorithms with and without GIB optimization.} 
\vspace{-1em}
\label{experiment2}
\end{figure} 
We first conduct ablation studies to show the contribution of individual modules. As shown in Table \ref{communication_volume_m} and \ref{communication_volume_h}, the comparison between ConsMAC with and without ETM indicates the removal of the VT-ETM component causes a slight decline in information processing performance and a significant increase in redundant communication volume $\tilde{\textbf{m}}_{i}^{(t)}$. Moreover, removing the GIB module also results in a compression performance degradation and a greater accumulation of unnecessary inferred consensus $\textbf{h}^{(t)}$. Afterward, we compare ConsMAC with GIB on top of ConsMAC with other baselines  \cite{xiang2023decentralized,Tarmac2019,masia2022}. Fig. \ref{experiment2} presents the corresponding result. It can be observed that due to the incorporation of GIB, it significantly outperforms other baselines, including the state-of-the-art ConsMAC algorithm.
\begin{table}[t!]
    \centering
    \caption{Comparison of the communication volume with and without VT-ETM, under different maximum observation ranges $\delta _\text{com}$.}
    \vspace{-.5em}
    \begin{tabular}{ccccccccc}
        \toprule
        $\delta _\text{com}$ (m) &$\mathbf{2.1}$ & $\mathbf{2.4}$ & $\mathbf{2.7}$ & $\mathbf{2.8}$ \\ 
        \midrule
        w. ETM  & $\mathbf{846.32}$ & $\mathbf{880.47}$ & $\mathbf{994.17}$ & $\mathbf{931.34}$ \\
        w.o. ETM & $874.95$ & $950.84$ & $1086.77$ & $955.89$ \\
        \bottomrule
    \end{tabular}
    \label{communication_volume_m}
    \vspace{-2em}
\end{table}

\begin{table}[t!]
    \centering
    \caption{Comparison of inferred consensus with and without GIB, under different maximum observation ranges $\delta _\text{com}$.}
    \vspace{-.5em}
    \begin{tabular}{ccccccccc}
        \toprule
        $\delta _\text{com}$ (m) &$\mathbf{2.1}$ & $\mathbf{2.4}$ & $\mathbf{2.7}$ & $\mathbf{2.8}$ \\ 
        \midrule
        w. GIB  & $\mathbf{208.35}$ & $\mathbf{199.77}$ & $\mathbf{216.48}$ & $\mathbf{202.54}$ \\
        w.o. GIB & $450.38$ & $464.04$ & $506.77$ & $511.13$ \\
        \bottomrule
    \end{tabular}
    \label{communication_volume_h}
    \vspace{-1em}
\end{table}



Finally, we focus on the universal applicability of VT-ETM plugin in other baselines, and provide the related results in Fig. \ref{experiment3}. A clear trend emerges that the adoption of VT-ETM leads to notable performance improvement for all methods. In particular, 
integrating the VT-ETM mechanism into the ConsMAC algorithm yields an even more remarkable performance enhancement, further reinforcing the effectiveness and reliability of the ETM plugin across diverse algorithms.

\begin{figure}[tp]
    \vspace{-0cm}
    \centering
    \includegraphics[scale=0.45]{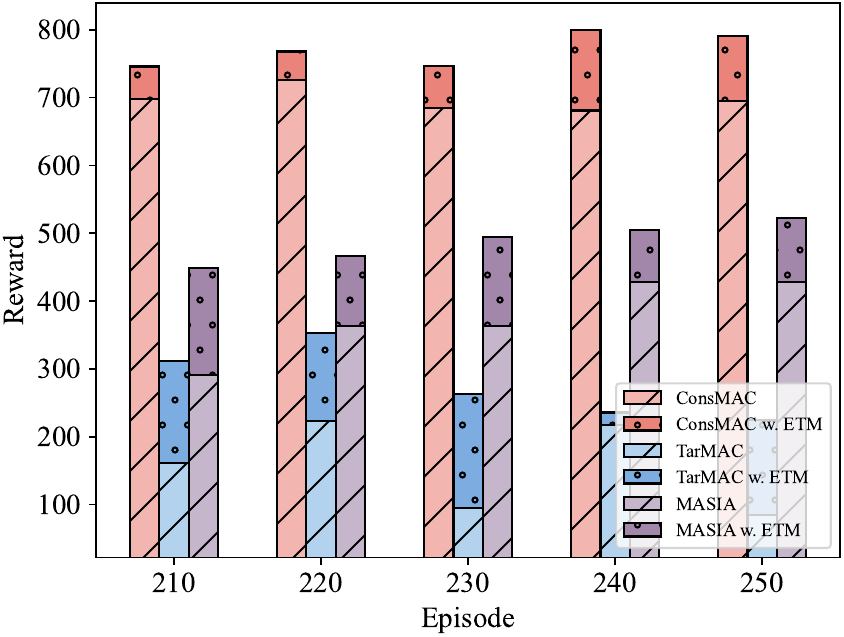}
    \vspace{-0.2cm}
\caption{Performance Comparison of consensus algorithms with and without ETM.} 
\label{fig:R} 
\vspace{-1.6em}
\label{experiment3}
\end{figure} 

\section{Conclusions}\label{sec5_Conclusions}
In this work, we have proposed and validated CDE-GIB, a robust, event-triggered integrated communication and control framework with GIB optimization. To be specific, 
we have implemented a GIB module that jointly optimizes the communication graph and data flow in ConsMAC methodology, which effectively compresses the consensus into a sufficient and compact representation. Additionally, a VT-ETM algorithm has been employed to assess the information importance based on the fusion of historical data and current observations, while an opportunistic transmission mechanism has been leveraged to reduce the dissemination of redundant communication messages during the interactive process of reaching consensus. We have conducted extensive experiments to demonstrate the effectiveness and adaptability of our proposed method in communication-limited environments. In future work, we will further explore larger-scale formations under stricter communication constraints and deploy the approach on a more practical hardware platform.
\bibliographystyle{IEEEtran}
\bibliography{arxiv}
\end{document}